\documentclass[journal]{IEEEtran}
\usepackage{amsmath}
\usepackage{indentfirst}
\usepackage{latexsym}
\usepackage{graphicx}
\usepackage{bm}
\usepackage{amssymb}
\usepackage{array}
\usepackage{setspace}
\usepackage{fancyhdr}
\usepackage{float}
\usepackage{algorithm}
\usepackage{algorithmicx}
\usepackage{algpseudocode} 
\usepackage{epsfig}
\usepackage{epstopdf}

\newcommand{\subparagraph}{}
\usepackage{titlesec}
\usepackage{balance}
\usepackage{hyperref}

\usepackage{csquotes}
\MakeOuterQuote{"}

\usepackage{caption}
\captionsetup{font={small}}

\newcommand{\ignore}[1]{}


\makeatletter
\newcommand{\vast}{\bBigg@{3.6}}
\newcommand{\Vast}{\bBigg@{4.6}}
\makeatother

\newtheorem{theorem}{Theorem}

\usepackage{parskip}
\setlength{\parindent}{10pt}
\setlength{\parskip}{0ex plus 0.5ex minus 0.3ex}
\hyphenation{op-tical net-works semi-conduc-tor}
\titlespacing*{\section} {0pt}{2ex plus 1ex minus .2ex}{1.5ex plus .2ex}

\begin{document}

%
\title{\vspace{-.5cm} Throughput Maximization in Cloud Radio Access Networks using Network Coding}

\author{
{Mohammed S. Al-Abiad$^\ast$, Ahmed Douik$^{\dagger}$, Sameh Sorour$^\ddagger$, and MD Jahangir Hossain$^\ast$\\}%
{$^\ast$ The University of British Columbia, Kelowna, BC, Canada\\
$^\dagger$California Institute of Technology, Pasadena, CA, United States of America \\
$^\ddagger$ University of Idaho, Moscow, ID, United States of America \\
$^\ast$mohammed.saif@mail.ubc.ca,jahangir.hossain@ubc.ca $^\dagger$ahmed.douik@caltech.edu, $^\ddagger$samehsorour@uidaho.edu}
\vspace{-.8cm} }

  \maketitle

\global\csname @topnum\endcsname 0
\global\csname @botnum\endcsname 0

\begin{abstract}
This paper is interested in maximizing the total throughput of cloud radio access networks (CRANs) in which multiple radio remote heads (RRHs) are connected to a central computing unit known as the cloud. The transmit frame of each RRH consists of multiple radio resources blocks (RRBs), and the cloud is responsible for synchronizing these RRBS and scheduling them to users. Unlike previous works that consider allocating each RRB to only a single user at each time instance, this paper proposes to mix the flows of multiple users in each RRB using instantly decodable network coding (IDNC). The proposed scheme is thus designed to jointly schedule the users to different RRBs, choose the encoded file sent in each of them, and the rate at which each of them is transmitted. Hence, the paper maximizes the throughput which is defined as the number of correctly received bits. To jointly fulfill this objective, we design a graph in which each vertex represents a possible user-RRB association, encoded file, and transmission rate. By appropriately choosing the weights of vertices, the scheduling problem is shown to be equivalent to a maximum weight clique problem over the newly introduced graph. Simulation results illustrate the significant gains of the proposed scheme compared to classical coding and uncoded solutions.
\end{abstract}
\begin{IEEEkeywords} 
Cloud Radio Access Networks, instantly decodable network coding, rate adaptation, coordinated scheduling.
\end{IEEEkeywords}

\IEEEpeerreviewmaketitle

\section{Introduction}

The continuously increasing demand for high-speed data transfer over the air imposes severe burdens on current wireless networks, and thus calls for the design of extremely efficient transmission solutions. Moreover, the scarcity of the radio resources raises extra challenges on the next generation wireless networks to meet the expected quality of service requirements by end-users. To address these issues, a revolution in the network's architecture design is required. The move towards dense cellular architectures solved a component of the problem but raised more problems regarding interference management. The solution was recently proposed in \cite{Andrews_2014} which gave birth to cloud radio access networks (CRANs) \cite{Park_13,Cai_2014} in which multiple remote radio heads (RRHs) are coordinated in a centralized fashion by a computing unit known as the cloud. 

With this new architecture, there is a need to design scheduling schemes of RRHs and their associations to users in order to fully utilize its benefits. Without cloud coordination, RRH scheduling in heterogeneous networks was performed using a preassigned association of mobile users and RRHs, e.g., proportional fair scheduling \cite{4,5}. Recent works on CRANs, e.g., \cite{park2013joint,shi2014group}, suggested scheduling users to RRHs in a coordinated fashion at the cloud so as to maximize the network total ergodic capacity. \ignore{The studies are extended to include the joint optimization of the scheduling with the beamforming vectors and the power level of each radio resource block (RRB).} These works, however, view the network solely from the physical layer, e.g., \cite{15}. Therefore, each radio resource block (RRB) serves only one user in each transmission instant. Clearly, this does not take into consideration upper layer facts. For instance, it has been recently found that users tend to have a common interest in downloading popular files (especially videos) within a small interval of time, thus creating a pool of side information in the network. This paper proposes to incorporate network coding (NC) in the scheduling decisions of RRHs and RRBs in order to utilize this side information in enhancing the system throughput when other users request to download these same files.

NC was introduced \cite{850663} as a new paradigm that performs flow mixing  (i.e., coding) at intermediate nodes in the network. It has demonstrated great benefits to improve the performance for numerous network metrics such as throughput improvement and delay minimization \cite{6512065}. A particularly interesting sub-class of NC \ignore{that is well-designed for time-critical applications, e.g., sensors instructions and emergency messages to crew members,} is the instantly decodable network coding (IDNC). Indeed, thanks to its instant decodability properties and straightforward operations to encode/decode files, IDNC was the subject of numerous studies, e.g., \cite{M.}-\cite{598452}. IDNC uses XOR-based operations for encoding at the transmitters and decoding at the receivers. These simple operations are well-adapted to small and battery-powered devices.

Different studies on IDNC revealed various code construction schemes with excellent potential in minimizing various system parameters for different applications and network settings. For example, while the authors in \cite{M.} suggest minimizing the total transmission time, i.e., the completion time, reference \cite{S.} optimizes the decoding delay. Similarly, the authors in \cite{A.} introduce a delay-based framework to reduce the completion time. Recently, IDNC was employed in a heterogeneous network setting to minimize the completion time of the users by jointly selecting the message combinations and transmission rates of each RRH \cite{598452}. This paper aims to extend the study to the CRAN setting. By exploiting the side information of each user, obtained from previous file downloads, coded files can be multicast to several users to increase the total received throughput in each transmission. This work investigates the use of IDNC in CRANs in order to maximize such received throughput.

Given the above facts, the overall received throughput maximization problem of interest in this paper must consider the joint scheduling of users to RRBs in RRHs, choice of the encoded file sent in each of them, and the rate at which each of them is transmitted. For practical CRANs design, e.g., limited capacity backhaul links, the above joint problem should be solved under the constraint that each user can be scheduled to a single RRH, but possibly to multiple RRBs within it. Furthermore, each RRH should transmit at a rate less than the ergodic capacity of all the scheduled users for each RRB. Consequently, increasing the number of multiplexed users in each RRB can easily decrease the transmission rate, and thus the overall system throughput.

The main contribution of this paper is to solve the above received throughput maximization problem. To that end, the paper introduces a graph, called herein the CRAN-IDNC graph, in which vertices represent a 5-tuple combination of RRH, RRB, user, file, and transmission rate. A one-to-one correspondence between the cliques of the graph and the solutions of the scheduling problem is established. By carefully designing the weights of the vertices, the problem of interest is shown to be equivalent to a maximum weight clique problem over the CRAN-IDNC graph. \ignore{Simulation results show the performance of the proposed scheme and reveal that it largely outperforms coding-free solutions.}

The rest of this paper is organized as follows: Section \ref{SMMM} introduces the system model. The network coding model and the problem formulation are illustrated in section \ref{C-RAN}. Section \ref{PS} designs the graph and presents the proposed solution. Finally, before concluding in Section \ref{CC}, Section \ref{SR} plots and discusses the simulation results.

\section{System Model and Parameters} \label{SMMM}

\subsection{Network Model}
\begin{figure}[t]
\centering
\includegraphics[width=0.8\linewidth]{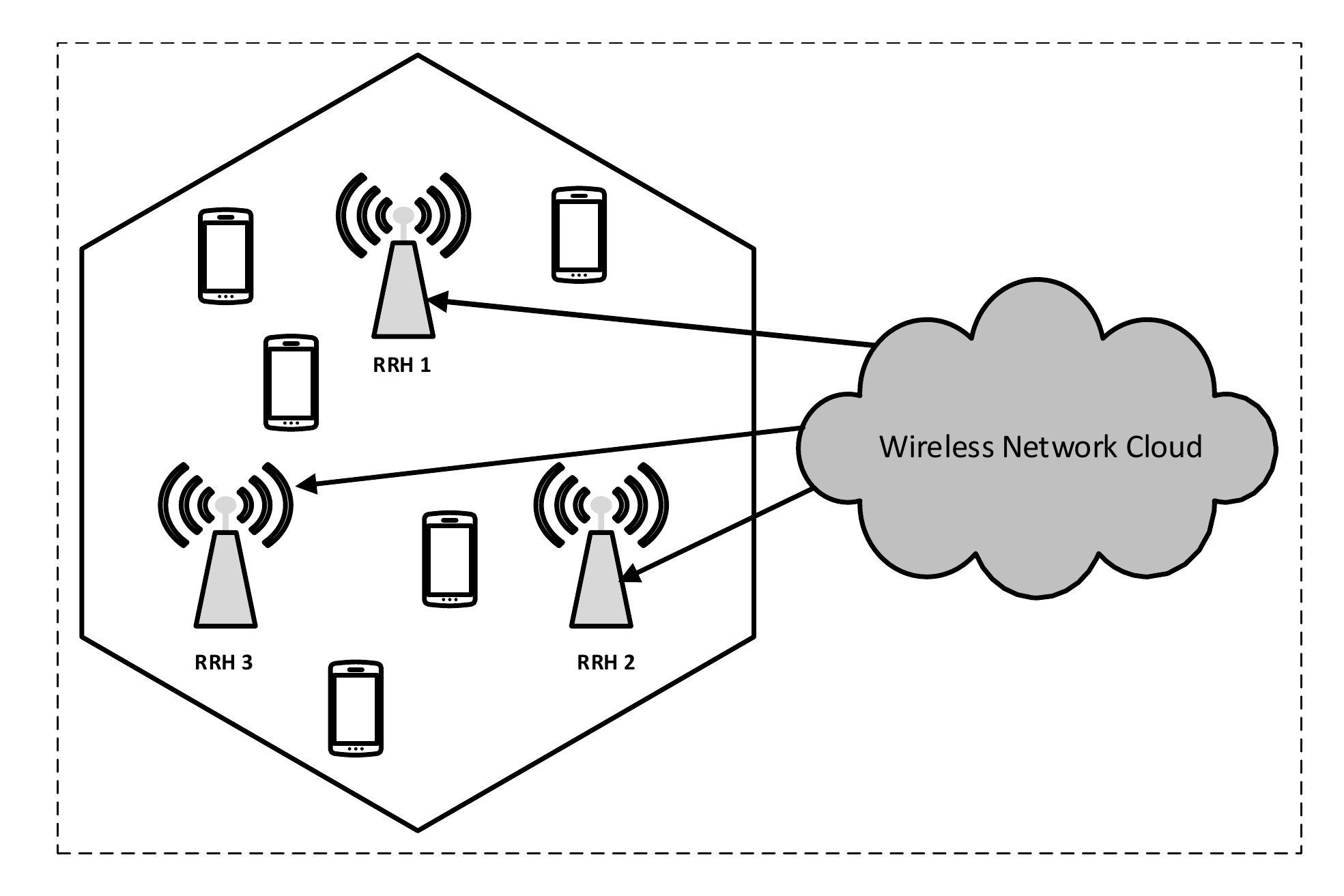}
\caption{A CRAN composed of $5$ users, and $3$ RRHs.}
\label{fig1}
\end{figure}
This paper considers the use of IDNC in the downlink of a CRAN as the one shown in Figure \ref{fig1}. The network consists of a set $\mathcal{B}$ of $B$ remote radio heads (RRHs), which are distributed in different geographic locations within the cell. These $B$ RRHs are connected to a central computing unit, i.e., the cloud controller, which is responsible for the dynamic scheduling, control decisions, and synchronization of all the RRHs' transmit frames. For instance, in Fig. \ref{fig1}, RRHs $1$, $2$ and $3$ cooperate to serve all mobile users in their joint coverage range. Each RRH's transmit frame consists of $Z$ orthogonal time/frequency RRBs as shown in figure \ref{fig2}.

Let $\mathcal{Z}$ be the set of RRBs in the frame of each RRH. The transmit power $P_{bz}$ of the $z$-th RRB in the $b$-th RRH can be typically different from those of other RRBs, and is maintained at a fixed value within the RRB's duration. The RRB power levels are typically updated in an outer loop, but this falls outside the scope of this paper and is left of future investigations. The total number of available RRBs is the total number of RRH times the number of RRBs in each RRH, that is $Z_{tot}=BZ$.

The network serves a set $\mathcal{U}$ of $U$ mobile users possessing side information. The cloud thus exploits this side information to allocate users, which can be simultaneously satisfied by one coded transmission to the same RRBs of one RRH (See Section III.A for more details). To reduce handover complexities, each user can be assigned to at most a single RRH, but possible to many RRBs within it. The use of coding in the scheduling decisions allows the mixing of multiple users' requests in one RRB, provided that their requests can be all served by one coded file, and that file is transmitted at a suitable rate, given the RRB's transmit power, for all of them to receive it correctly.
\begin{figure}[t]
\centering
\includegraphics[width=0.6\linewidth]{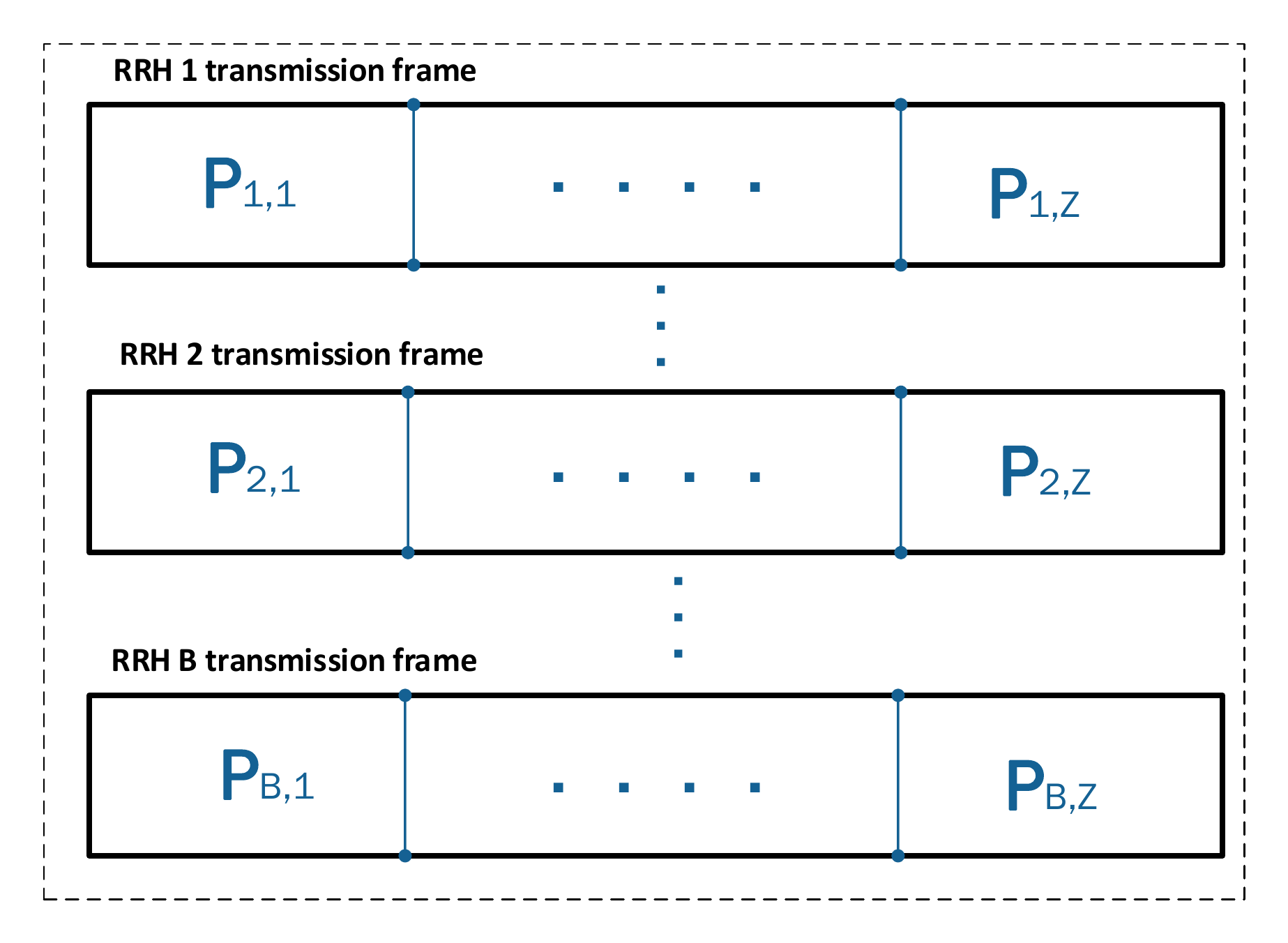}
\caption{Frame structure of the $B$ RRHs each containing $Z$ RRBs.}
\label{fig2}
\end{figure}

\subsection{Physical Layer Model}

Let $h_{bz}^{u}$ be the complex channel gain from the $z$-th RRB in the $b$-th RRH to the $u$-th user. The channel gain is assumed to remain constant during the transmission time of a single file. This paper considers a general model for the channels without restrictions on their distribution. However, it uses the standard assumption that these values are perfectly estimated and available at the cloud. The ergodic capacity of the $u$-th user assigned to the $z$-th RRB in the $b$-th RRH can be expressed as:
\begin{equation}
R^{u}_{bz}(t)= \log_{2}(1+\text{SINR}^{u}_{bz}),
\end{equation}
wherein SINR$^{u}_{bz}$ is the corresponding signal-to-interference plus noise-ratio experienced by the $u$-th user when it is associated with RRB $z$ of the $b$-th RRH. This SINR can be expressed by the following formula:
\begin{equation}
\text{SINR}^{u}_{bz} = \cfrac{P_{bz} |h^{u}_{bz}|^{2}}{\sigma^2 +\sum\limits_{b'\neq b}P_{b'z}| h^{u}_{b'z}|^{2}},
\end{equation}
where $\sigma^{2}$ is the complex Gaussian noise variance. Note that this paper assumes orthogonal RRBs within the same RRH. As a result, interference at the $z$-th block is only seen from RRBs that have the same index $z$ across the other RRHs. The reception of an encoded file sent in the $z$-th RRB of the $b$-th RRH is successful at the $u$-th mobile user if the transmission rate $R_{bz}$ is less than or equal the user's capacity, i.e., $R_{bz}\leqslant R_{bz}^u$. In other words, the $z$-th RRB of $b$-th RRH can transmit at a rate at most equal to the minimum ergodic capacity of its assigned users. It is important to note that this paper assumes perfect channel estimation scenario so that the $t$-th file transmission by $z$-th RRB of $b$-th RRH to its assigned users is received successfully without any erasures.

The set of achievable capacities of all users in all RRBs across all RRHs can be represented by the set:
\begin{align}
\mathcal{R} = \bigotimes_{(b,z,u) \in \ \mathcal{\mathcal{B}} \times \mathcal{Z} \times \mathcal{\mathcal{U}}} R^u_{bz}, \label{eqw3}
\end{align} 
wherein the symbol $\bigotimes$ represents the product of the set of the achievable capacities. 

\section{Network Coding and Problem Formulation} \label{C-RAN}

This section first introduces the side information model and IDNC. Afterward, the overall throughput maximization problem in CRANs is formulated.

\subsection{Instantly Decodable Network Coding}

The paper assumes that the users are interested in receiving one or more files out of a set of $\mathcal{{F}}$ of $F$ files, which are deemed popular due to their previous multiple downloads by different subsets of the users. All files in $\mathcal{{F}}$ are assumed to have the same size of $N$ bits, and thus, an XOR encoding of any number of files (that we will refer to as an encoded file) is also $N$ bits.\ignore{ Due to the dynamic capacities of users and the erasure nature of the channel links, the successive transmissions create an asymmetric side information at each user. Indeed, at each transmission slot, the set $\mathcal{{F}}$ can be decomposed for $u$-th user into the following sets:} The different prior downloads of the users from $\mathcal{F}$ creates an asymmetric side information in the network. Indeed, in each scheduling epoch, the files of $\mathcal{F}$ can be classified for each user $u$ as follows:
\begin{itemize}
\item The \textit{Has} set $\mathcal{H}_{u}$ containing the files previously downloaded by the $u$-th user.
\item The \textit{Wants} set $\mathcal {W}_u\subseteq\mathcal{F}\backslash \mathcal {H}_u$ containing the file(s) requested by the $u$-th user in the current scheduling frame.
\end{itemize}
It is typical in this scenario that the cloud keeps a log of all downloaded files by the users, and thus the side information is perfectly available to the cloud\ignore{ which collects feedback and acknowledgments through the different RRHs}. This side information can be thus exploited to transmit encoded files instead of sending simple uncoded files. Consequently, the cloud controller is thus required to jointly perform the users-RRH/RRB scheduling, encoded file selection, and rate selection for each RRB, in order to maximize the number of received decodable bits, i.e., the throughput, in each scheduling frame.

\begin{figure}[t]
\centering
\includegraphics[width=0.6\linewidth]{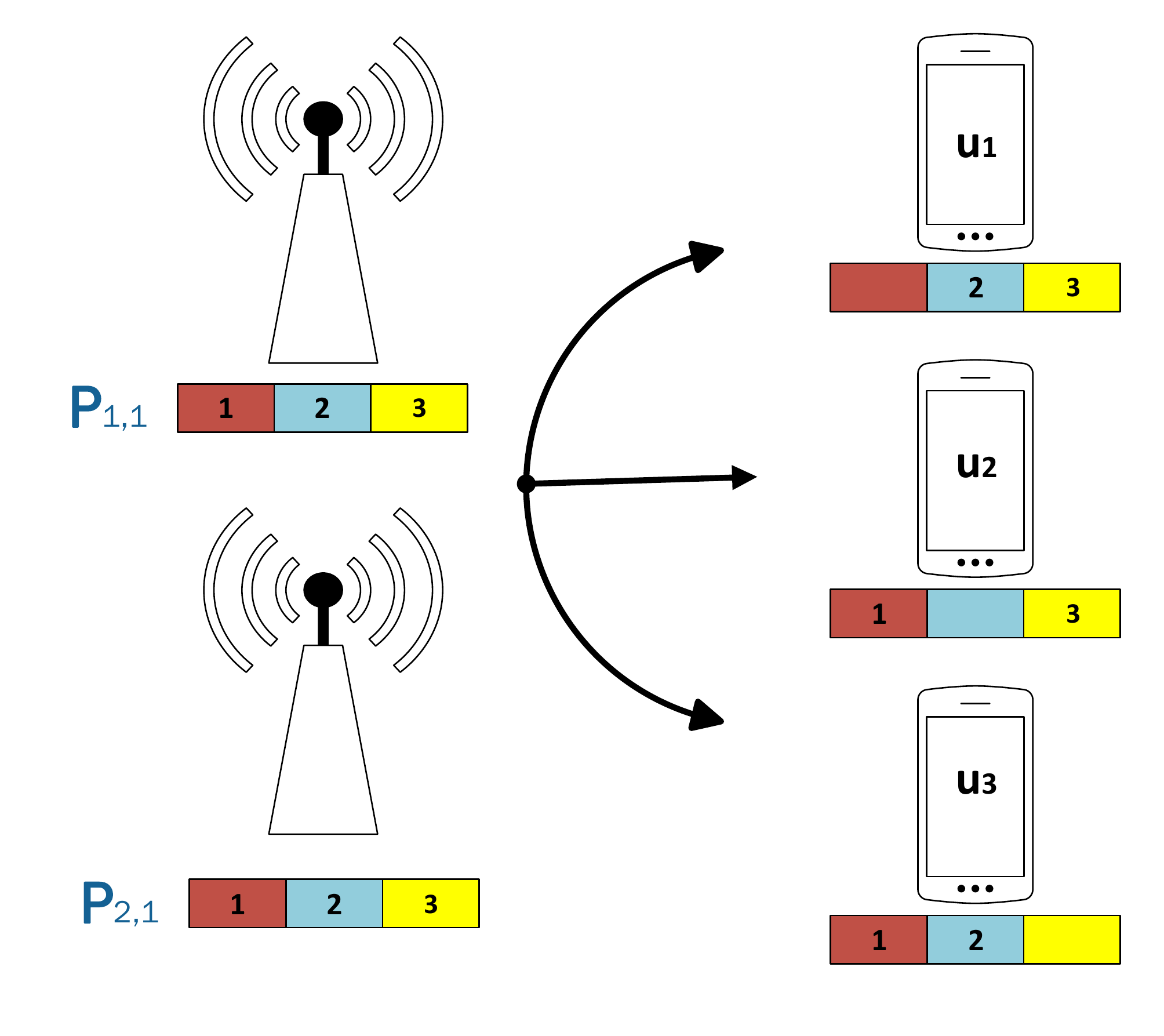}
\caption{A CRAN composed of $3$ users, $3$ files,  $2$ remote radio
heads, and $1$ RRB in each RRH's transmit frame.}
\label{fig3}
\end{figure}

IDNC allows the cloud to generate XOR-encoded files using the source files in $\mathcal{F}$. Therefore, users that receive a combination that contains only \emph{one} file from its Wants sets are able to decode the combination and retrieve their wanted file. A combination that can be used to extract a new wanted file by any user is said to be instantly decodable for that user.

Consider Figure \ref{fig3} which illustrates an example of a CRAN system composed of $3$ users, $3$ files, $2$ RRHs, and $1$ RRB per RRH frame. Assuming that the achievable capacities of all users to the RRB is $1$ bit/second. The file combination $1\oplus2$ is instantly decodable for users $1$ and $2$ but not for $3$. The optimal achievable overall throughput in this scenario is $3$ bits/s. Indeed, the first RRH targets users $1$ and $2$ with $1\oplus2$ and the second serves user $3$ with $3$. Such approach improves upon the achievable $2$ bits/s throughput without coding.

\subsection{Problem Formulation}

This paper considers maximizing the throughput in the aforementioned CRAN setting by assigning users to the RRBs in the RRHs under the following network connectivity constraints (CC):
\begin{itemize}
\item \textbf{CC}: Each mobile user can connect to at most a single RRH, but possibly to many RRBs in that RRH.
\end{itemize}

Let $\kappa_{bz}$ be the transmitted (encoded) file in the $z$-th RRB of the $b$-th RRH, i.e, $\kappa_{bz}$ is an element of the power set $\mathcal{P}(\mathcal{F})$. Let $X_{bz}^u$ be a binary variable that is set to $1$ if user $u$ is assigned to the $z$-th RRB of the $b$-th RRH, and zero otherwise. Let $Y_{b}^u$ be a binary variable that is set to $1$ if user $u$ is assigned to the $b$-th RRH, and zero otherwise. The overall throughput maximization problem can be formulated as follows:
\begin{subequations}
\label{Cache_Scheduling_Problem}
\begin{align}
&\max\sum _{b\in \mathcal{\mathcal{B}}} \sum _{z \in \mathcal{\mathcal{Z}}} \sum _{u\in\mathcal{\tau}_{bz}(\kappa_{bz})}X_{bz}^u R_{bz} \label{eq4a} \\
 &{\rm s.t.\ } Y_{b}^u=\mathcal \min\left( \sum _{z}X_{bz}^u,1 \right),~\forall~ (b,u)\in \mathcal{B} \times \mathcal{U}   , \label{eq4b} \\
& \sum _{b}Y_{b}^u\leqslant1,~\forall~ u\in \mathcal{U}, \label{eq4c} \\
& \tau_{bz}(\kappa_{bz}) = \left\{u \in \mathcal{U}\ \bigg|\kappa_{bz} \cap \mathcal{W}_u = 1  \mbox{ \& } R_{bz}\leqslant R^u_{bz}\right\}, \label{eq4d}\\
&  X_{bz}^u , Y_{b}^u\in \{0,1\},{\kappa}_{bz}\in \mathcal{P}(\mathcal{F}), (u,b,z)\in \mathcal{U}\times \mathcal{B}\times\mathcal{Z},\label{eq4e}
\end{align}
\end{subequations}
The optimization is carried over the variables $X_{bz}^u$, $Y_{b}^u$, $\kappa_{bz}$, and $R_{bz}$. The variables $Y_{b}^u$ and $X_{bz}^u$ are discrete optimization parameters that represent the user-RRH and user-RRB associations, respectively. On the other hand, the variables $\kappa_{bz}$ and $R_{bz}$ account for the file combination and the transmission rate for the $z$-th RRB of the $b$-th RRH, respectively. Constraints \eqref{eq4b} and \eqref{eq4c} translate the system condition \textbf{CC}, i.e., each user must connect to at most one RRH. The variable $\tau_{bz}(\kappa_{bz})$ denotes the targeted set of users benefiting from the encoded file transmitted in the $z$-th RRB of the $b$-th RRH. Consequently, Constraint \eqref{eq4d} ensures that all users belonging to these targeted sets $\tau_{bz}(\kappa_{bz})$ $\forall~b\in\mathcal{B}$ and $z\in\mathcal{Z}$ must receive an instantly decodable transmission. In other words, the targeted users can retrieve and decode a new file from the combination $\kappa_{bz}$ when transmitted at the rate $R_{bz}$. 

Being a mixed discrete-continuous optimization problem, the throughput maximization problem expressed in (\ref{eq4a}) may require an extensive search over all possible user-RRH/RRB associations, file combinations, and determining the achievable capacities for every possible association in each RRB across all RRHs. The rest of this paper proposes a more efficient algorithm to solve the optimization problem in (\ref{eq4a}).

\section{Proposed Throughput Maximization Solution}\label{PS}

In order to efficiently solve the mixed integer-continuous problem, this paper uses graph theory techniques to map the feasible points to cliques in a newly introduced CRAN-IDNC graph. The graph is formed by several clusters (called herein RRB-IDNC subgraphs) each corresponding to one RRB in the network. Furthermore, by carefully designing the weights of the vertices, the joint optimization problem is shown to be equivalent to finding the maximum weight clique over the CRAN-IDNC graph. Finally, the optimal and low-complexity graph-theoretic algorithms available in the literature, e.g., \cite{15522856,2155446,6848102,5341909}, can be exploited to find the solution.

In order to construct the CRAN-IDNC graph, this section first creates its building block known as the RRB-IDNC subgraph. Such subgraph is generated for each RRB $z$ in each RRH $b$ with a total of $Z_{tot}=B \times Z$ local subgraphs to be created. The CRAN-IDNC graph is constructed by taking the union of all these subgraphs and appropriately determining the edges between their vertices.

Let $\mathcal{A}$ be the set of all possible associations between RRHs, RRBs, users, files, and achievable capacities, i.e., $\mathcal{A}=\mathcal{B}\times\mathcal{Z}\times \mathcal{U}\times \mathcal{F}\times\mathcal{R}$. Let $\varphi_b$, $\varphi_z$, $\varphi_u$, $\varphi_f$, and ,$\varphi_r$ be mappings from the set $\mathcal{A}$ to the set of RRHs $\mathcal{B}$, the set of RRBs $\mathcal{Z}$, the set of users $\mathcal{U}$, the set of files $\mathcal{F}$, and the set of achievable capacities $\mathcal{R}$. In other words, given the element $y=(b,z,u,f,R) \in \mathcal{A}$, the mappings are defined as $\varphi_b(y)=b$, $\varphi_z(y)=z$, $\varphi_u(y)=u$, $\varphi_f(y)=f $ and $\varphi_r(y)=R$, respectively. Finally, let $\mathcal{P}(\mathcal{A})$ represent all possible schedules of associations between RRHs, RRBs, users, files, and the achievable capacities regardless of the feasibility of each, i.e., whether each schedule satisfy \textbf{CC} or not. Similarly, let $\mathcal{A}_{bz}$ represent the association relative to the $z$-th RRB in the $b$-th RRH, i.e., $y \in \mathcal{A}_{bz} \Rightarrow \varphi_b(y)=b,$ and $\varphi_z(y)=z$.
\begin{table}[!t]
\renewcommand{\arraystretch}{1}
\caption{Variables and Parameters of the System}
\label{table_example1}
\centering
\begin{tabular}{|l|l|}
\hline
$\mathcal{U}$ & Set of $U$ mobile users\\
\hline
$\mathcal{F}$ & Set of $F$ files \\
\hline
$\mathcal{B}$ & Set of $B$ Radio Remote Heads (RRH) \\
\hline
$\mathcal{Z}$ & Set of $Z$ Radio Resource Blocks (RRBs)\\
\hline
$\mathcal{R}$ & Set of all achievable capacities \\
\hline
$\mathcal{W}_u$ & Set of wanted files by user $u$ \\
\hline
$R_{bz}$ & Transmission rate of $z$ RRB in $b$ RRH \\
\hline
$\kappa_{bz}$ & The encoded file of the $z$-th RRB \\ & in the $b$-th RRH \\
\hline
$\tau_{bz}(\kappa_{bz})$ & Set of targeted users by  $\kappa_{bz}$  \\
\hline
$\mathcal{A}$ & Set of all possible associations \\
\hline
$\mathcal{A}_{bz}$ & The association of $z$ RRB in $b$ RRH \\
\hline

\end{tabular}
\end {table}

\subsection{Construction of the RRB-IDNC Subgraphs}
Let the RRB-IDNC subgraph of RRB $z$ in RRH $b$ be denoted by $\mathcal{G}^{bz}(\mathcal{V}^{bz},\mathcal{E}^{bz})$ wherein $\mathcal{V}^{bz}$ and $\mathcal{E}^{bz}$ refer to the set of vertices and edges of this subgraph, respectively. This RRB-IDNC subgraph is constructed by generating a vertex $v$ for each possible association $ s \in \mathcal{A}_{bz}$. Two vertices $v \in \mathcal{V}^{bz}$ associated with $s \in \mathcal{A}_{bz}$ and $v^\prime \in\mathcal{V}^{bz}$ associated with $s^\prime \in\mathcal{A}_{bz}$ are adjacent by an edge, if both following local conditions (LC) are true:
\begin{itemize}
\item {\textbf{LC1}:} $\left(\varphi_f(s)\in \mathcal{H}_{\varphi_u(s')}~\mbox{and}~ \varphi_f(s^\prime)\in \mathcal{H}_{\varphi_u(s)}\right)$ or $\left(\varphi_f(s)= \varphi_f(s')\right)$. This condition shows that the file combination is immediately decodable for both users $\varphi_u(s)$ and $\varphi_u(s^\prime)$ as they either have each other's requested file or are requesting the same file, respectively. 
\item {\textbf{LC2}}: $\varphi_r(s)=\varphi_r(s^\prime)$. This condition guarantees that all adjacent vertices in the subgraph have the same transmission rate.
\end{itemize}

\subsection{Construction of the CRAN-IDNC Graph}

As stated earlier, the CRAN-IDNC graph $\mathcal{G}(\mathcal{V},\mathcal{E})$ is constructed by first generating all the $Z_{tot}$ RRB-IDNC subgraphs. The vertex set of the CRAN-IDNC graph is simply the union of the vertices of all the RRB-IDNC subgraphs; i.e., $\mathcal{V} = \bigcup_{b \in \mathcal{B}}\bigcup_{z \in \mathcal{Z}}\mathcal{V}^{bz}$. The edges between vertices within the same RRB-IDNC subgraph are already described in the previous subsection. Two different vertices belongs to two different subgraphs are then set adjacent if their combination results in a feasible schedule, i.e., it satisfies the system constraint \textbf{CC}. To express this fact mathematically, let vertex $v\in \mathcal{G}^{bz}$ be corresponding to the association $s \in \mathcal{A}_{bz}$ and vertex $v^\prime\in \mathcal{G}^{b^\prime z^\prime}$ corresponding to the association $s^\prime \in \mathcal{A}_{b^\prime z^\prime}$. The vertices $v$, $v^\prime$ are adjacent if they satisfy one of the following general conditions: 
\begin{itemize}
\item {\textbf{GC1}:} $\varphi_u(s)=\varphi_u(s')$ and $\varphi_b(s)=\varphi_b(s')$, $\forall~(s,s')\in \mathcal{A}_{bz}\times \mathcal{A}_{b^\prime z^\prime}$. This condition translates the fact that the same user can be served with multiple RRBs within the same RRH. 
\item {\textbf{GC2}:}  $\left(\varphi_b(s) =\ \varphi_b(s^\prime)~\mbox{and}~\varphi_f(s)= \varphi_f(s')\right)$ OR $\left(\varphi_b(s) =\ \varphi_b(s^\prime)~\mbox{and}~\varphi_f(s)\in \mathcal{H}_{\varphi_u(s')}~\mbox{and}~ \varphi_f(s^\prime)\in \right.$ $\left.\mathcal{H}_{\varphi_u(s)}\right)$. This condition guarantees that the encoded combinations of the same users can be served by multiple RRBs within the same RRH.
\item {\textbf{GC3}:} $\varphi_u(s) \neq\ \varphi_u(s^\prime)$ and $\left(\varphi_b(s), \varphi_z(s)\right) \neq\ \left(\varphi_b(s^\prime),\varphi_z(s')\right)$. This condition completes the adjacencies in the graph for any two vertices not opposing the CC constraint for any two different users. 
\end{itemize} 

An example of the CRAN-IDNC graph is shown in Fig. \ref{fig4} for a simple network consisting of $3$ users, $3$ files, $2$ RRHs, and $1$ RRB in each RRH frame, i.e., the system in figure \ref{fig3}. In this example, each vertex is labeled $bzufr$, where $b$, $z$, $u$, $f$, and $r$ represent the indices of RRHs, RRBs, users, files, and achievable capacities respectively. The Dashed and solid lines in Fig. \ref{fig4} represent the edges generated by the aforementioned conditions, and the potential maximal cliques in this example represented in the graph by slid lines are: $\{11111,21221\}$, $\{21332,11111\}$, $\{21221,11332\}$, $\{21111,21221,21331\}$, $\{11111,11221,11331\}$, $\{21112,11222,11332\}$, and $\{11223,21112,21332\}$, achieving a total throughputs of 2, 3, 3, 3, 3, 6, and 7 bits/sec, respectively. Clearly, the last maximal clique should be the one selected as it maximizes the throughput of the entire CRAN for this scheduling frame\ignore{, but serves all three users simultaneously (as opposed to cliques 1, 2, 3, 6, and 7 according to the above order)}.

\begin{figure}[t]
\centering
\includegraphics[width=1\linewidth]{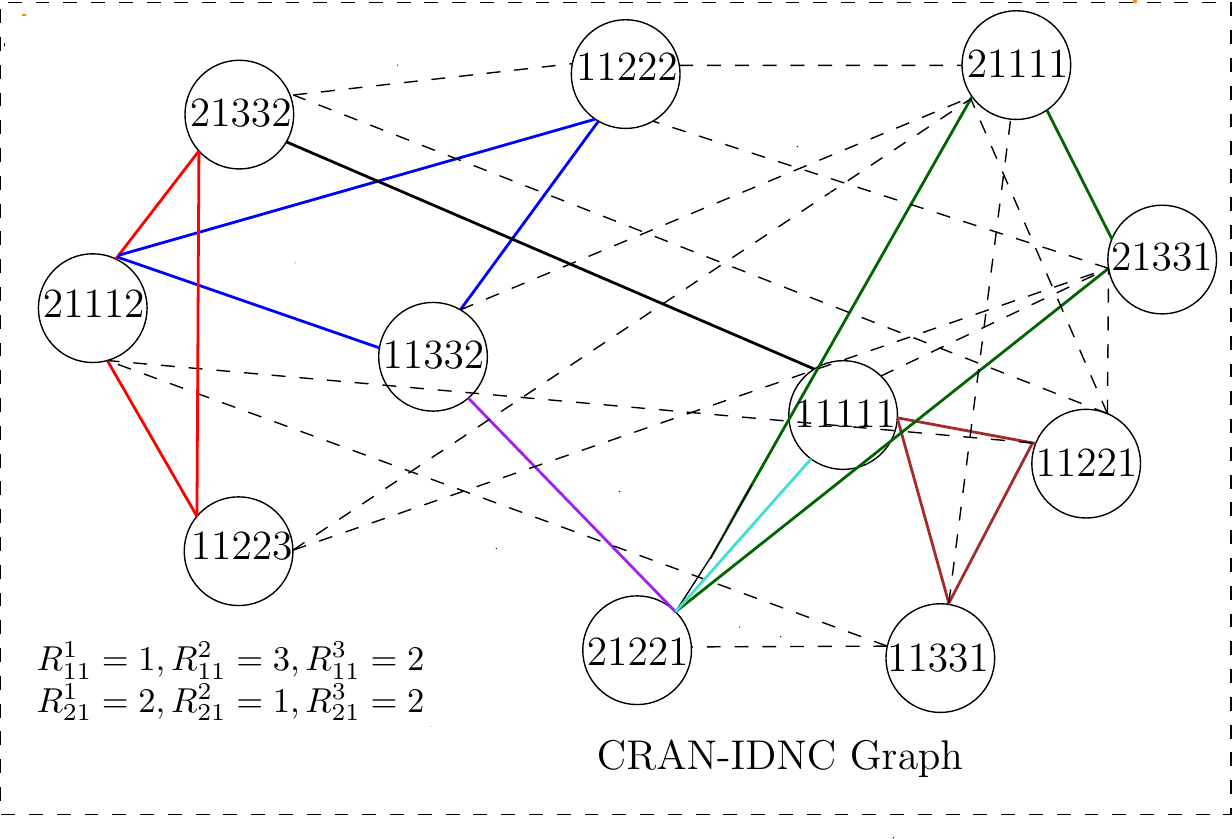}
\caption{The CRAN-IDNC graph of the network presented in Figure \ref{fig3}.}
\label{fig4}
\end{figure}

\subsection{Optimal Assignment Solution}

Given the above construction of the CRAN-IDNC graph $\mathcal{G}(\mathcal{V},\mathcal{E})$, it can be established that any maximal clique in the graph satisfies the following criterion:
\begin{itemize}
\item All users with vertices in the maximal clique can decode a new file from the transmission schedule of all RRBs and RRHs.
\item The transmission rate of each RRB identified by the vertices in a maximal clique is less than or equal to the channel capacities of all the users having vertices with the same RRB in that clique. 
\end{itemize}

The following theorem characterizes the solution to the throughput maximization problem of interest in this paper.
\begin{theorem}
The CRAN throughput maximization problem in (\ref{eq4a}) is equivalent to a maximum weight clique problem over the CRAN-IDNC graph, wherein the weight of a vertex $v \in \mathcal{V}$ corresponding to the association $s=(b,z,u,f,r) \in \mathcal{A}$ is given by:
\begin{align}
w(v) = r.
\end{align}
The set of targeted users and the file combination of the $z$-th RRB in the $b$-th RRH is obtained by combining the vertices of the maximum weight clique corresponding to the local RRB-IDNC graph $\mathcal{G}^{bz}$.
\end{theorem}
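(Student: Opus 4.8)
The plan is to prove the equivalence by setting up a weight-preserving bijection between the maximal cliques of $\mathcal{G}(\mathcal{V},\mathcal{E})$ and the feasible schedules of \eqref{eq4a} --- that is, the choices of $\{X_{bz}^u\}$, $\{Y_b^u\}$, $\{\kappa_{bz}\}$, $\{R_{bz}\}$ satisfying \eqref{eq4b}--\eqref{eq4e} --- and then observing that the total vertex weight of a clique coincides with the objective \eqref{eq4a} of the matched schedule. Since every weight $w(v)=r$ is non-negative, a maximum weight clique is automatically inclusion-maximal, so maximizing \eqref{eq4a} over feasible schedules is the same as maximizing $\sum_{v\in\mathcal{C}}w(v)$ over cliques $\mathcal{C}$; the rule for reading $\kappa_{bz}$ and $\tau_{bz}(\kappa_{bz})$ off the local subgraph $\mathcal{G}^{bz}$ is exactly the clique-to-schedule half of this bijection.

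\emph{Clique $\Rightarrow$ feasible schedule.} Given a clique $\mathcal{C}$, for each RRB $(b,z)$ look at $\mathcal{C}\cap\mathcal{V}^{bz}$: by \textbf{LC2} all its vertices carry one common rate label, which I take as $R_{bz}$; I set $\kappa_{bz}$ to the XOR (set union in $\mathcal{P}(\mathcal{F})$) of the file labels $\varphi_f(s)$ of those vertices, $X_{bz}^u=1$ exactly when a vertex of $\mathcal{C}\cap\mathcal{V}^{bz}$ has user label $u$, and $Y_b^u$ as in \eqref{eq4b}. Feasibility is then three checks. (a) Constraint \eqref{eq4c}: a user with vertices in two RRBs of \emph{different} RRHs would force an edge satisfying one of \textbf{GC1}--\textbf{GC3}, but \textbf{GC1}/\textbf{GC2} require equal RRH indices and \textbf{GC3} requires distinct users --- impossible --- so every user met by $\mathcal{C}$ sits in a single RRH. (b) Instant decodability: for a user $u$ with vertex $s_u\in\mathcal{C}\cap\mathcal{V}^{bz}$, \textbf{LC1} applied to each other pair inside $\mathcal{V}^{bz}$ shows every file of $\kappa_{bz}$ is either $\varphi_f(s_u)$ or lies in $\mathcal{H}_u$, hence $\kappa_{bz}\cap\mathcal{W}_u=\{\varphi_f(s_u)\}$, a singleton, which is \eqref{eq4d}. (c) The rate bound $R_{bz}\le R_{bz}^u$ holds because a vertex $(b,z,u,f,r)$ exists only for rates $r$ achievable by $u$ on that RRB. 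Thus $\mathcal{C}$ yields a feasible point of \eqref{eq4a}.

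\emph{Feasible schedule $\Rightarrow$ clique.} Conversely, for each $(b,z)$ and each $u$ with $X_{bz}^u=1$ (so $u\in\tau_{bz}(\kappa_{bz})$), let $f_u$ be the unique element of $\kappa_{bz}\cap\mathcal{W}_u$ and put the vertex $(b,z,u,f_u,R_{bz})$ into $\mathcal{C}$. Two vertices of $\mathcal{C}$ in the same $\mathcal{V}^{bz}$ satisfy \textbf{LC2} (a single $R_{bz}$ was chosen) and \textbf{LC1} (decodability of $\kappa_{bz}$ for both users makes their wanted files either equal or mutually in each other's \textit{Has} set); two vertices in different subgraphs satisfy \textbf{GC1} or \textbf{GC2} when the users coincide --- by \eqref{eq4b}--\eqref{eq4c} they must then share an RRH --- and \textbf{GC3} when they differ. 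So $\mathcal{C}$ is a clique, which may be enlarged to a maximal one without lowering its weight. Finally, in each $\mathcal{V}^{bz}$ the clique contributes one vertex of weight $R_{bz}$ per targeted $u$ with $X_{bz}^u=1$, i.e., $\sum_{v\in\mathcal{C}\cap\mathcal{V}^{bz}}w(v)=\sum_u X_{bz}^u R_{bz}$; summing over $(b,z)$ recovers \eqref{eq4a}. The two halves together give the stated equivalence, and re-reading a maximum weight clique as in the first half returns $\kappa_{bz}$ and $\tau_{bz}(\kappa_{bz})$ from $\mathcal{G}^{bz}$.

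\emph{Where the work is.} The step needing the most care is (b): one must check that declaring $\kappa_{bz}$ to be the XOR of the per-vertex files never drops two distinct wanted files of one user into the same combination, nor a file that user neither has nor wants, and that the conjunction of all pairwise \textbf{LC1} relations inside $\mathcal{V}^{bz}$ is \emph{precisely} enough to force $|\kappa_{bz}\cap\mathcal{W}_u|=1$ for every targeted $u$ and for no other user --- i.e., pairwise instant decodability must be shown to imply joint instant decodability here. The complementary point is to confirm that \textbf{GC1}--\textbf{GC3} are \emph{exactly} the graph translation of \textbf{CC}, with no spurious edge admitting an infeasible schedule and no missing edge rejecting a feasible one; the remaining arithmetic (matching weights to \eqref{eq4a}) is routine.
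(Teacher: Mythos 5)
Your proposal is correct and follows essentially the same route as the paper: a one-to-one, weight-preserving correspondence between cliques of the CRAN-IDNC graph and feasible schedules of \eqref{eq4a}, with \textbf{GC1}--\textbf{GC3} shown to encode exactly the constraint \textbf{CC} and the vertex weights $w(v)=r$ summing to the objective. The only difference is one of self-containment: where the paper delegates the within-RRB part (that \textbf{LC1}/\textbf{LC2} characterize instantly decodable, rate-feasible transmissions) to reference \cite{20}, you prove it directly, including the pairwise-to-joint decodability step and the implicit restriction that vertices are only generated for wanted files and achievable rates.
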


\begin{proof}
This theorem is proved by demonstrating the following facts: The first fact establishes a one-to-one mapping between the feasible schedules and the cliques in the CRAN-IDNC graph. Afterwards, the weight of each vertex is set to be the contribution of the corresponding user to the network. Therefore, the maximum weight clique is a feasible solution with the maximum received-throughput. In other words, the maximum weight clique is the solution to (\ref{eq4a}). The complete proof can be found in Appendix \ref{APA}. 
\end{proof}

Recall that a clique in a graph is a set of vertices such that each pair is connected. The maximum weight clique problem is the one of finding the clique with the maximum weight. The maximum weight clique problem is an NP-complete problem, and even its approximation is hard \cite{16}. However, it can be optimally solved with a reduced complexity as compared to the naive search, e.g., the optimal algorithms in \cite{15522856,2155446}. Furthermore, multiple efficient polynomial time heuristics \cite{6848102,5341909} are proposed in the literature. This paper uses the simple quadratic complexity heuristic proposed in \cite{15}.

\section{Simulation Results} \label{SR}

\begin{table}[!t]
\renewcommand{\arraystretch}{1}
\caption{Simulation Parameters}
\label{table_example}
\centering
\begin{tabular}{|c|c|}
\hline
Cellular Layout & Hexagonal \\
\hline
Cell Diameter & 500 meters \\
\hline
Number of Users & Variable \\
\hline
Number of RRBs & Variable \\
\hline
Channel Model & SUI-Terrain type B \\
\hline
Channel Estimation & Perfect \\
\hline
High Power & -42.60 dBm/Hz\\
\hline
Background Noise Power &-168.60 dBm/Hz\\
\hline
Bandwidth &10 MHz\\
\hline
\end{tabular}
\end {table}

\begin{figure}[t]
\centering
\includegraphics[width=0.9\linewidth]{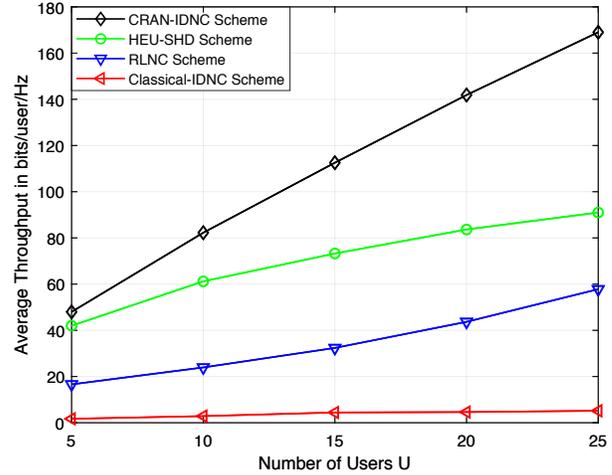}
\caption{Average Throughput in bits/user/Hz. Vs the number of users $U$.}
\label{fig5}
\end{figure}

\begin{figure}[t]
\centering
\includegraphics[width=0.9\linewidth]{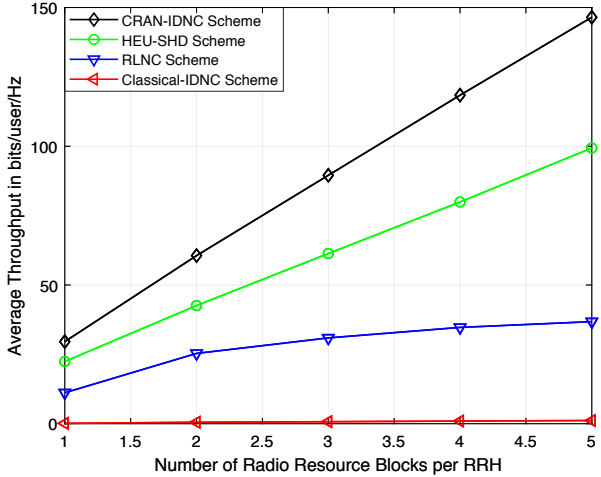}
\caption{Average Throughput in bits/user/Hz. Vs the number of Radio Resource Blocks $Z$.}
\label{fig6}
\end{figure}

This section illustrates the performance of the proposed solution in the CRAN described in Section II. The RRHs are uniformly placed in a hexagonal cell. Users are randomly distributed within the cell whose distance is set to $500$ meters. The number of users, number of RRBs and distribution of the side information varies so as to study multiple scenarios. The total number of remote radio heads is fixed to $3$. Table \ref{table_example} summarized the additional simulation parameters.

The performance of the proposed solution is compared to state-of-the-art coded and uncoded methods.\ignore{ As broadcast and unicast schemes, have been previously shown \cite{598452},\cite{20} to perform poorly against coded schemes, this paper omits them comparison.} In particular, this section adopts the following schemes (in a similar manner as \cite{20},\cite{21}) as a baseline algorithms for comparison:
\begin{itemize}
\item Classical IDNC (rate-unaware scheme): This scheme jointly optimizes the selection of an XOR file combination for each RRB in each RRH without considering the achievable capacities of users. After the file selection process, the CRAN's physical-layer employs the minimum achievable capacity of all users targeted by each RRB as its transmitting rate.
\item RLNC (rate-greedy scheme): In this scheme, each user is associated with a single RRB to which it has the maximum capacity. If more than one user is associated with the same RRB, random linear network coding (RLNC) is employed. The encoding is done irrespectively of the side information. Indeed, as stated earlier, RLNC mixes all files with different random coefficients. The selected transmission rate in each RRB is the capacity of users having the minimum achievable capacity in that RRB.
\item HEU SHD scheme (uncoded scheme): In this scheme, only one user is served in each RRB, each user can be assigned to more than one RRBs from the same RRH. This scheme is proposed in \cite{15}.\ignore{ so as to maximize the transmit sum rate of the CRAN.} 
\end{itemize}

Figure \ref{fig5} depicts the average throughput in bits/user/Hz achieved by our proposed and the aforementioned schemes for different numbers of users $U$, given $4$ RRBs per RRH, a file size $N=1$ Mb. From the figure, we note that our proposed CRAN-IDNC scheme greatly outperforms all other schemes. In particular, HEU-SHD only focuses on the high achievable rate at the expense of transmitting at most one message to a single user from each RRB in all RRHs, i.e., a maximum number of targeted users is $Z_{tot}$. On the other hand, RLNC scheme serves a large number of users in each transmission but sacrifices the rate optimality. One can also notice that the gap between our proposed scheme and the HEU-SHD scheme increases as the number of users increases. This can be explained by the fact that the maximum number of targeted users in HEU-SHD is $Z_{tot}$, so increasing number of users beyond the $Z_{tot}$ would not largely improve the performance of the scheme. The proposed scheme, however, benefits from increasing number of users by mixing the flows of more and more users to the same RRB given their instant decodability of their required files due to their side information. 

Figure \ref{fig6} shows the average throughput in bits/user/Hz achieved by different schemes for different numbers of RRBs $Z$, for $U$ = $15$ users and a file size $N=1$ Mb. Again, the figure shows that our proposed CRAN-IDNC scheme outperforms all other schemes. The gap in performance increases as the number of RRBs per frame grows. It can also be easily seen from the figure that the performances of both the proposed and the HEU-SHD schemes increase linearly with the increasing number of RRBs and a fixed number of users. In fact, both schemes meet in serving the same user in different RRBs of the same RRH. Therefore, increases the number of RRBs increases the total received throughput.
\begin{figure}[t]
\centering
\includegraphics[width=0.9\linewidth]{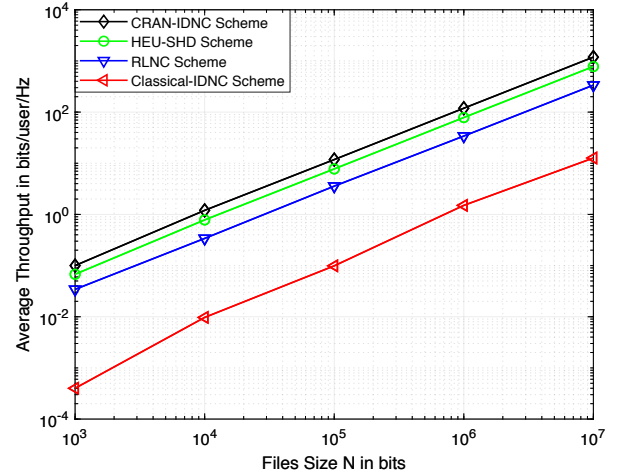}
\caption{Average Throughput in bits/user/Hz. Vs the file's size $N$ in bits.}
\label{fig7}
\end{figure}

Figure \ref{fig7} plots the average throughput as a function of the file size $N$\ignore{ in bits} in a network composed of $15$ users and $4$ RRBs per RRH's transmit frame. As the file's size increases, the performance of all schemes increases. The figure shows that all schemes increase linearly with the size of the file. This can be explained by the fact that, as the size of the file increases, more and more bits are received, thus increasing the average received throughput. Finally, despite its great merits in reducing the completion time of a frame of files in many prior works, Classical IDNC exhibits a very poor performance from a physical layer perspective, thus voiding its upper layer gains. This clearly shows the importance of rate-awareness in IDNC code design to achieve gains\ignore{ in performance} on both the upper and physical layers, thus leading to a real reduction in file delivery times.

\section{Conclusion}\label{CC}
This paper addresses the throughput maximization problem in cloud-enabled networks. Unlike previous studies that only considered the system from a physical layer point of view, we proposed to use the information available in the network to combine files using instantly decodable network coding. Therefore, the throughput maximization problem becomes same as the problem of assigning users to the available resource blocks and choosing the file combination and the transmission rate of each under the constraint that a user can connect to at most a single remote radio head. We have used a graph theoretical approach to solve the problem by introducing the CRAN-IDNC graph formed by several RRB-IDNC graphs. By establishing a correspondence between the feasible solution to the problem and the cliques in the graph, the problem is shown to be equivalent to a maximum-weight clique which can be efficiently solved using state of the art methods. Simulation results show the performance of the proposed scheme and reveal that it largely outperforms coding free solutions.

\appendices
\section{Proof of theorem 1} \label{APA}
\numberwithin{equation}{section}
\setcounter{equation}{0}
This theorem is proofed by showing a one-to-one mapping
between $\mathcal{C}$ (i.e., the set of  cliques) in the
CRAN-IDNC graph and the $\mathbf{S}\in\mathcal{P}(\mathcal{A})$, i.e., feasible schedule that satisfy \textbf{CC}. Afterward, the weight of the clique is shown to be equivalent to the optimization problem (\ref{eq4a}).
Therefore,  we first show that, for each maximal clique in CRAN DNC graph, there is a unique feasible schedule. Then, we show the converse.
The authors in \cite{20} show that there exists a one-to-one
mapping between the set of feasible transmissions and the
set of maximal cliques in the RRB-IDNC graph, i.e., satisfy \textbf{LC1} and \textbf{LC2}.
Hence, to extend the results of \cite{20} to the CRAN- IDNC
graph, we only need to show the following: \begin{itemize}
\item 
The feasible transmissions from different RRHs are adjacent, i.e., the
constraint \textbf{GC1}.
\item
The same user can be targeted from different RRBs within the same RRH frame, i.e., the
constraint \textbf{GC2}.
\item
Two different users  with two different RRBs should be connected, i.e., the
constraint \textbf{GC3}.

\end{itemize}
From the general condition \textbf{GC1} of the CRAN-IDNC graph, the same user cannot be targeted by distinct RRHs. Thus, the vertices representing RRH $b$ in the RRB-IDNC graphs are connected to the vertices in the RRB-IDNC graphs of RRH $b'$, as long as the targeted users are different. From the general condition \textbf{GC2} of the CRAN-IDNC graph, the same user can connect to multiple RRBs within the same RRH transmit frame.  Moreover, from the general condition \textbf{GC3} of the C-RAN graph, two different users with two different RRBs should be connected. That is, the vertices in the $z$-th RRB-IDNC are connected to the $z'$-th RRB-IDNC as long as users are not the same. Therefore, each feasible association of users, RRHs, RRBs, file combinations, and their transmission
rates is represented by a clique $C\in\mathcal{C}$ within the CRAN-IDNC graph. 

On the other hand, it can readily be seen that each clique represents a feasible schedule $\mathbf{S}$ as it does not violate the
conditions \textbf{LC1, LC2,}  \textbf{GC1}, \textbf{GC2} and \textbf{GC3}. Indeed, for the maximum clique $C\in\mathcal{C}$, the transmission of the combination $\kappa_{bz} = \bigoplus_{s\in C} \varphi_f(s)$ in radio resource block $z$ in remote radio head $b$ at rate $r$ is instantly decodable for all users $\tau_{bz}(\kappa_{bz}) = \bigcup_{s\in C} \varphi_u(s)$. 

To conclude the proof,  the  solution of the optimization problem (\ref{eq4a}) is the maximum-weight clique, where the weight of each vertex can be given  by:
\begin{align}
\begin{split} 
w(v)=r. 
\end {split}
\end{align}


\begin{thebibliography}{10}

\bibitem{Andrews_2014}
J.~G. Andrews, S.~Buzzi, W.~Choi, S.~V. Hanly, A.~Lozano, A.~C.~K. Soong, and J.~C. Zhang, ``What will 5g be?" \emph{{IEEE} Journal on Selected Areas in  Communications}, vol.~32, no.~6, pp. 1065--1082, jun 2014.

\bibitem {Park_13}
Park, Seok-Hwan, et~al. ``Robust and efficient distributed compression for cloud radio access networks", \emph{IEEE Transactions on Vehicular Technology} 62.2 (2013): 692-703.

\bibitem {Cai_2014}
Cai, Yegui, F.~Richard Yu, and Shengrong Bu. ``Cloud computing meets mobile wireless communications in next generation cellular networks." \emph{IEEE Network} 28.6 (2014): 54-59.

\bibitem{4}
W. Yu, T. Kwon, and C. Shin, ``Joint scheduling and dynamic power spectrum optimization for wireless multicell networks," \textit{in Proc. 44th Annu. Conf. Inf. Sci. Syst. (CISS'10), Princeton}, NJ, USA, Mar. 2010, pp. 1-6.

\bibitem{5}
W. Yu, T. Kwon, and C. Shin, ``Multicell coordination via joint scheduling, beamforming, and power spectrum adaptation," \textit{IEEE Transactions on Wireless Communications,} vol. 12, no. 7, pp. 1-14, Jul. 2013.

\bibitem{park2013joint}
S.-H. Park, O.~Simeone, O.~Sahin, and S.~Shamai, ``Joint precoding and multivariate backhaul compression for the downlink of cloud radio access networks," \emph{IEEE Transactions on Signal Processing}, vol.~61, no.~22, pp. 5646--5658, 2013.

\bibitem{shi2014group}
Y.~Shi, J.~Zhang, and K.~B. Letaief, ``Group sparse beamforming for green cloud-ran," \emph{IEEE Transactions on Wireless Communications}, vol.~13, no.~5, pp. 2809--2823, 2014.

\bibitem{15} A. Douik, H. Dahrouj, T. Y. Al-Naffouri, and M.-S. Alouini,
``Coordinated scheduling for the downlink of cloud radio-access networks", \textit{in Proc. IEEE Int. Conf. Commun. (ICC'15)}, London, U.K.,2014, pp. 2906-2911.

\bibitem{850663}
R.~Ahlswede, N.~Cai, S.-Y. Li, and R.~Yeung, ``Network information flow," \emph{IEEE Transactions on Information Theory}, vol.~46, no.~4, pp.1204--1216, Jul. 2000.


\bibitem{6512065}
T.~Ho and D.~Lun, \emph{Network Coding: An Introduction}. New York, NY, USA: Cambridge University Press, 2008.

\bibitem {M.}
S. Sorour and S. Valaee, ``Completion delay minimization for instantly decodable network codes," \textit{IEEE/ACM Transaction on networking}, vol. 23, no. 5, pp. 1553-1567, Oct. 2015.

\bibitem {S.}
S. Sorour and S. Valaee, ``Minimum broadcast decoding delay for generalized instantly decodable network coding," \textit{in Proc. of IEEE Global Telecommunications Conference (GLOBECOM'10), Miami, Florida, USA}, Dec 2010, pp. 1-5.

\bibitem {A.}
A. Douik, S. Sorour, M.-S. Alouini, and T. Y. Al-Naffouri, ``Completion time reduction in instantly deocdable network coding through decoding delay control,"\textit{ in Proc. of IEEE Global Telecommunications Conference (GLOBECOM'14), Austin, Texas, USA}, Dec. 2014.

\bibitem{598452}
M.~Saif, A.~Douik, and S.~Sorour, ``Rate aware network codes for coordinated  multi {Base-Station} networks," \emph{in Proc. of IEEE International  Conference on Communications (ICC'16)}, Kuala Lumpur, Malaysia, May 2016.




\bibitem{16}
M. Garey and D. Johonson, ``Computers and Intractability - A Guide to the Theorey of NP-Completeness." \textit{Freeman, New York}, 1979.




\bibitem{15522856} 
Fomin, Fedor V., Fabrizio Grandoni, and Dieter Kratsch. ``A measure \& conquer approach for the analysis of exact algorithms." \textit{Journal of the ACM} (JACM) 56.5 (2009): 25.

\bibitem{2155446} 
Bourgeois, Nicolas, \emph{et al.}, ``A Bottom-Up Method and Fast Algorithms for max independent set." \textit{SWAT.} Vol. 6139. 2010.

\bibitem{6848102} 
Wan, Peng-Jun, \emph{et al.}, ``Fast and simple approximation algorithms for maximum weighted independent set of links." \textit{IEEE 2014 Proceedings of INFOCOM,} 2014.

\bibitem{5341909} 
Esfahani, Navid Nasr, \emph{et al.}, ``A note on the p-time algorithms for solving the maximum independent set problem." \textit{IEEE 2nd Conference on. Data Mining and Optimization, 2009.} DMO'09. 2009.

\bibitem{20} 
A. Douik, S. Sorour, T. Y. Al-Naffouri, and M.-S. Alouini, ``Rate aware instantly decodable network codes," \textit{IEEE Global Telecommunications Conference (GLOBECOM'15), San Diego, California, USA,} Dec. 2015.
\bibitem{21} 
M. S. Karim, A. Douik, S. Sorour, P. Sadeghi, ``Rate-Aware Network Codes for Completion Time Reduction in Device-to-Device Communications," in \emph{Proc. of IEEE International  Conference on Communications (ICC'16)}, Kuala Lumpur, Malaysia, May 2016.
\end{thebibliography}
\end{document}